\theoremstyle{plain}
\newtheorem{prop}{Proposition}
\newcommand{\be}{\begin{eqnarray}}
\newcommand{\ee}{\end{eqnarray}}
\newcommand{\bc}{\begin{center}}
\newcommand{\ec}{\end{center}}
\newcommand{\nn}{\nonumber \\}
\newcommand{\lb}{\label}
\newcommand{\p}[1]{(\ref{#1})}
\begin{document}

\begin{titlepage}

\vspace*{0.2cm}

\renewcommand{\thefootnote}{\star}
\begin{center}

{\LARGE\bf  On exactly solvable ghost-ridden systems}

\vspace{2cm}

{\Large A.V. Smilga} \\

\vspace{0.5cm}

{\it SUBATECH, Universit\'e de
Nantes,  4 rue Alfred Kastler, BP 20722, Nantes  44307, France. }

\end{center}
\vspace{0.2cm} \vskip 0.6truecm \nopagebreak

   \begin{abstract}
\noindent We discuss exactly solvable systems involving integrals of motion with higher powers of momenta. If one of these integrals is chosen for the Hamiltonian, we obtain a system involving {\it ghosts}, i.e. a system whose  Hamiltonian is not bounded neither from below, nor from above. However, these ghosts are {\it benign}: there is no collapse and unitarity is not violated.

 As an example, we consider the 3-particle Toda periodic chain, with the cubic invariant $I$ chosen for the Hamiltonian. The classical trajectories exhibit regular oscillations, and the spectrum of the quantum Hamiltonian is discrete running from $-\infty$ to $\infty$. 
 We also discuss the classical dynamics of a {\it perturbed} system with the Hamiltonian $H = I + v$, where $v$ is an oscillator potential. Such a system is not exactly solvable, but its classical trajectories exhibit not regular, but still benign behaviour without collapse. This means that also the corresponding quantum problem is well defined.  
 
 The same observation can be made for exactly solvable (1+1)-dimensional field theories involving an infinite number of conservation laws: any of them can be chosen for the Hamiltonian. We illustrate this for the Sine-Gordon and KdV models. In the latter case, the Lagrangian and standard integrals of motion involve higher spatial rather than temporal derivatives. But one can always interchange $x$ and $t$, after which we obtain a system with benign ghosts. 
 
   \end{abstract}

\end{titlepage}

\setcounter{footnote}{0}

\setcounter{equation}0

\section{Introduction}
The first example of a nontrivial interacting system which involves ghosts, but where ghosts are benign  was found in \cite{Robert}. Its Hamiltonian has the form
 \be
 \lb{HamRob}
 H = pP + DV'(x),
 \ee
 where $(p,x)$ and $(P,D)$ are two pairs of dynamical variables and $V(x)$ is an even polynomial. In the simplest case,
  \be
  \lb{V(x)}
  V(x) \ =\ \frac {\omega^2 x^2}2 + \frac {\lambda x^4}4.
   \ee
   Obviously, the sign of the classical Hamiltonian \p{HamRob} can be as well  positive as negative.
   The spectrum of its quantum counterpart is not bounded neither from above, nor from below. 
     
   This system also involves the second integral of motion,
     \be
     \lb{N}
     N \ =\ \frac {P^2}2 + V(x) \,.
      \ee
      The Poisson bracket $\{H,N\}$ vanishes, and
      this means that this system is exactly solvable and its classical trajectories can be explicitly expressed in terms of certain elliptic functions and the integrals thereof. The trajectories do not exhibit collapse: the solutions to the equations of motion exist for all times.
      
      If the classical system is benign, its quantum counterpart is also benign \cite{obzor}. Indeed, the quantum effects bring about quantum fluctuations, which may prevent the system to run into a singularity. This happens e.g. to the 3-dimensional Hamiltonian
       \be
       \lb{H-center}
       H \ =\ \frac {\vec{p}^2}{2m} - \frac \kappa {r^2} \, .
        \ee
        For any nonzero $\kappa$, there are infinitely many classical trajectories (all the trajectories with negative energies)
 that fall on the  center in finite time. One of them is depicted in Fig. \ref{spiral}. But the corresponding quantum problem is benign and the spectrum of the Hamiltonian is well defined if 
        $\kappa$ does not exceed the threshold
          \be
          \kappa^* \ =\ \frac 1{8m} \,.
           \ee
           For still larger values of $\kappa$, quantum fluctuations cannot cope with the highly attractive potential, and the system collapses \cite{padenie}. 
           
           Many ghost-ridden systems are associated with higher-derivative Lagrangians. In fact, {\it any} nondegenerate higher-derivative system involves ghosts \cite{Woodard,eston}. The inverse is not true:  the Lagrangians corresponding to the Hamiltonians \p{HamRob}, \p{H-center} do not include higher derivatives. 
           
           The spectral problem for the quantum counterpart of \p{HamRob} can be exactly solved. There is an infinite number of localized states with zero energy and also the bands of continuous spectrum:
           $E \in [\omega, \infty); (-\infty, -\omega]; [2\omega, \infty); (-\infty, -2\omega]$, etc.
      
      \begin{figure} [ht!]
      \bc
    \includegraphics[width=.5\textwidth]{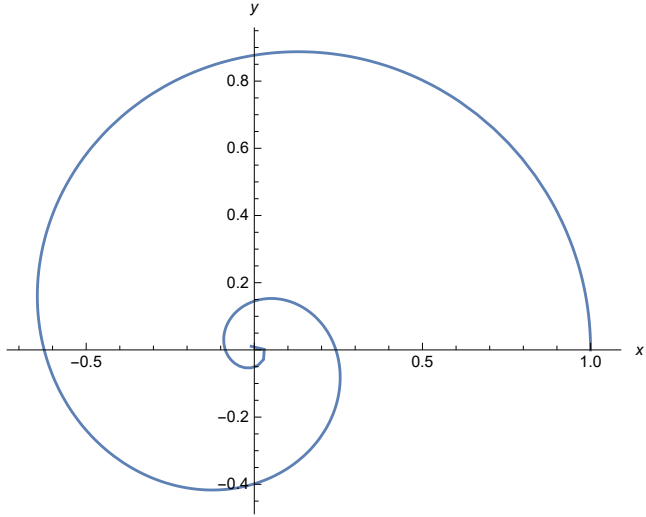}                  
     \ec
    \caption{Falling on the center in the potential \p{H-center} with $m=1$ and $\kappa = .05$. The energy is slightly negative. The particles with positive energies escape to infinity.}        
 \label{spiral}
    \end{figure}  
   
In Ref. \cite{Robert}, also the system representing a perturbation of the system \p{HamRob},
\be
\lb{HamRob-pert}
 H = pP + DV'(x) - \frac \gamma 2 (D^2 + P^2),
 \ee
     was studied. This system is not exactly solvable anymore, but a numerical analysis exhibits a benign behaviour of classical trajectories. They do not collapse.
     
     Later, some other mechanical systems with benign ghosts were  found \cite{KovPav}.
     
     There are also field theory systems with similar properties. We presented an example of such system in \cite{duhi-pole}. Its    Lagrangian reads
     \be
     \lb{duhi-field}
     {\cal L} \ =\ \partial_\mu \phi \partial_\mu D - D V'(\phi),
      \ee
      where $\phi$ and $D$ are now field variables depending on $x$ and $t$; $\mu = 0,1$. This system represents the (1+1)-dimensional generalization of \p{HamRob}. It is not exactly solvable, and the behaviour of its classical trajectories can be studied only numerically. We performed this study and found no trace of collapse: the ghosts are benign!
      
      For sure, when there is no exact analytic solution and the equations of motion can be solved only numerically, one is never sure: for {\it some} initial conditions the trajectory may still collapse. But 
      for the exactly solvable system \p{HamRob}, we {\it are} sure about its benign nature.
     
  \section{Toda chain}
  
  Our main observation is that, besides \p{HamRob}, there exist many other exactly solvable systems with benign ghosts.
  
  Consider as an example the closed Toda  chain with three particles. The Hamiltonian is 
   \be
   \lb{HamToda}
   H \ =\ \frac 12 (p_1^2 + p_2^2 + p_3^2) + V_{12} + V_{23} + V_{31} \,,
      \ee
      where $V_{12} = e^{q_1 - q_2}$, etc. Besides the energy, the system involves an obvious integral of motion $P = p_1 + p_2 + p_3$, as well as the less obvious cubic invariant
       \be 
       \lb{I3}
       I \ =\ \frac 13(p_1^3 + p_2^3 + p_3^3) + p_1 (V_{12} + V_{31} ) + 
       p_2 (V_{12} + V_{23} ) + p_3 (V_{23} + V_{31} ) \,.
        \ee
        The Poisson brackets $\{H, I\}$,   $\{H, P\}$ and $\{I, P\}$ vanish. For the system with three degrees of freedom, we have three integrals of motion that are in involution.  The system is exaclty solvable.
        
        In contrast to the open Toda chain, where the potential represents the sum of only two terms $V_{12} + V_{23}$, the potential in \p{HamToda} keeps the three particles together. If we impose the requirement $q_1 + q_2 + q_3 = 0$ (the center of mass does not move), the motion is finite, with the classical trajectories representing complicated nonlinear oscillations. In the quantum problem \cite{Gutz}, we impose the constraint $\hat P \Psi(q_1, q_2, q_3) = 0$, in which case the spectrum of $\hat H$ is discrete. The eigenfunctions of \p{HamToda}   are simultaneously the eigenfunctions of $\hat I$. \footnote{The operator $\hat I$ is defined by Eq. \p{I3} where all the momenta $\hat p_j$ stay consistently on the left or consistently on the right: the two orderings give the same result.}
        
        Now note that $\hat I$ is cubic in momenta and its eigenvalues can be both positive and negative. The following simple property holds:
        
        \begin{prop}
        For each state with energy $E$ and positive eigenvalue of $\hat I$, there exists a state with the same energy and the negative  eigenvalue of $\hat I$.
         \end{prop}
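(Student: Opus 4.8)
The plan is to exhibit an explicit antiunitary symmetry that commutes with $\hat H$ and $\hat P$ but anticommutes with $\hat I$, and then to push a joint eigenstate through it. Classically, the map $q_j\mapsto q_j$, $p_j\mapsto -p_j$ is time reversal: it carries any solution of the equations of motion to another solution, leaves the Hamiltonian \p{HamToda} (even in the momenta) and the constraint $P=0$ invariant, and flips the sign of the cubic invariant \p{I3} (odd in the momenta). Its quantum implementation is the complex-conjugation operator $K\Psi(q_1,q_2,q_3)=\Psi^*(q_1,q_2,q_3)$, which is antiunitary and satisfies $K\hat q_jK^{-1}=\hat q_j$, $K\hat p_jK^{-1}=-\hat p_j$.

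First I would record how $K$ acts on the three conserved operators. Since $V_{12},V_{23},V_{31}$ are real, $\hat H$ is a differential operator with real coefficients, so $K\hat HK^{-1}=\hat H$. The operator $\hat P=\hat p_1+\hat p_2+\hat p_3$ is odd in the momenta, so $K\hat PK^{-1}=-\hat P$; in particular $K$ maps the physical subspace $\{\hat P\Psi=0\}$ onto itself. Finally, every monomial in $\hat I$ contains an odd number of momenta — three in $\tfrac13\sum_j\hat p_j^3$, one in each term $\hat p_jV_{\cdots}$ — so $\hat I$, in the ordering fixed by the footnote (all momenta to the left, equivalently all to the right), is a differential operator with purely imaginary coefficients, whence $K\hat IK^{-1}=-\hat I$.

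Then the proof reduces to a one-line computation. Take a common eigenstate $\Psi$ of $\hat H$ and $\hat I$ in the physical subspace, $\hat H\Psi=E\Psi$, $\hat I\Psi=\mu\Psi$ with $\mu>0$ (a simultaneous eigenbasis exists, as noted above), and set $\Phi=K\Psi$. Antiunitarity gives $\|\Phi\|=\|\Psi\|$ and $\hat P\Phi=-K\hat P\Psi=0$, so $\Phi$ is a genuine state in the same sector. Using $\hat HK=K\hat H$ and $\hat IK=-K\hat I$ together with the antilinearity of $K$, one gets $\hat H\Phi=K(E\Psi)=\bar E\,\Phi=E\,\Phi$ and $\hat I\Phi=-K(\mu\Psi)=-\bar\mu\,\Phi=-\mu\,\Phi$ (here $E$ and $\mu$ are real because $\hat H$ and $\hat I$ are self-adjoint, the latter by the footnote: $\hat I_{\text{left}}=\hat I_{\text{right}}=\hat I_{\text{left}}^{\dagger}$). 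Thus $\Phi$ has energy $E$ and $\hat I$-eigenvalue $-\mu<0$, and in fact $\Psi\mapsto K\Psi$ is an explicit bijection between the two families of states.

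The main thing to be careful about is the third step of the bookkeeping: one must check that the ordering prescription of the footnote does not interfere with $K\hat IK^{-1}=-\hat I$, which is immediate because each $\hat p_j=-i\partial_j$ contributes exactly one factor of $-i$ regardless of where it is placed, so an odd-degree-in-momenta operator always acquires purely imaginary coefficients. Everything else — normalizability, membership in the $P=0$ sector, reality of the eigenvalues — is automatic, so I do not expect a real obstacle here; the only ``work'' is to make the differential-operator coefficient argument airtight.
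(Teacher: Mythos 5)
Your proof is correct, but it uses a genuinely different symmetry from the paper's. The paper exhibits a \emph{unitary} (linear) reflection $q_1\leftrightarrow -q_2$, $q_3\to -q_3$, under which $V_{12}\to V_{12}$, $V_{23}\leftrightarrow V_{31}$ and $p_1\leftrightarrow -p_2$, $p_3\to -p_3$; this leaves $\hat H$ invariant and anticommutes with $\hat I$, so applying it to an eigenstate flips the sign of the $\hat I$-eigenvalue by a purely linear-algebraic computation. You instead use the \emph{antiunitary} time-reversal operator $K$ (complex conjugation), exploiting that $\hat H$ is even and $\hat I$ odd in the momenta. Your bookkeeping is sound: the ordering prescription is indeed irrelevant to the sign flip, $K$ preserves the $\hat P\Psi=0$ sector because $K\hat P K^{-1}=-\hat P$, and the reality of $E$ and $\mu$ (needed when pushing eigenvalue equations through an antilinear map) follows from self-adjointness, which you correctly tie to the footnote's equality of the two orderings. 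The trade-off: your argument is more robust and general --- it works for any real potential and any integral of odd degree in the momenta, with no reliance on the specific permutation structure of the three-particle Toda potential --- whereas the paper's unitary symmetry is more elementary (no antilinearity to track) but is special to the closed Toda chain. Both establish the stronger statement that the spectrum of $\hat I$ at fixed energy is symmetric under $\lambda\to-\lambda$.
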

\begin{proof}
The operator $\hat H$ is invariant under the following discrete symmetry:\footnote{It is also invariant under cyclic permutations of $q_{1,2,3}$, but this is irrelevant for our purposes.}
\be
\lb{sym}
\hat P: \quad q_1 \leftrightarrow -q_2, \ q_3 \to -q_3 \,.
 \ee
 The Hamiltonian commutes with $\hat P$, so that, if $\Psi$ is an eigenfunction of $\hat H$, 
 $\hat P \Psi$ is also an eigenfunction of $\hat H$ with the same energy. But $\hat I$ {\it anticommutes} with   $\hat P$. It follows that, if $\hat I \Psi = \lambda \Psi$, then 
 $$\hat I \hat P \Psi \ =\ - \hat P \hat I \Psi \ =\  -\lambda \hat P \Psi\,.$$
  \end{proof}
 
 In other words, the spectrum of $\hat I$ is symmetric under reflection $\lambda \to -\lambda$.
 To prove that this spectrum extends indefinitely down to $-\infty$ (and extends indefinitely up to $\infty$), it is sufficient to consider the {\it classical} trajectories with large values of $|I|$. The highly excited eigenstates of $\hat H$ and $\hat I$ are related to these trajectories by WKB correspondence. But it is obvious that both the energy and $|I|$ can be arbitrary large. Simply take the initial conditions with $q_j(0) = 0$ and $p_2(0) = p_3(0) = - p_3(0)/2 = M $.  
 
 There is only one step to go. We {\it call} $\hat I$ rather that $\hat H$ the Hamiltonian. The new Hamiltonian does not have a ground state and is hence ghost-ridden. On the other hand, the spectral problem for $\hat I$ is quite well defined. There is no collapse and no loss of unitarity.
 
 A nice feature of the Hamiltonian $\hat I$, compared to the quantum counterpart of \p{HamRob} is its discrete spectrum.
 
 \subsection{Classical dynamics}
 For the quantum Toda problem, it does not matter which operator, $\hat H$ or $\hat I$, is called the Hamiltonian: they have the same spectrum. But the classical dynamics associates with $H$ and $I$ are different. For some reason, the scholars have not been so much interested in the latter. I only was able to find in the literature the Hamilton equations of motion associated with the cubic invariant for the open Toda chain \cite{Perelomov}. In our case, the Hamilton equations of motion for $I$ read 
  \be
  \lb{eqmot-I}
  \dot{q}_1  = p_1^2 + V_{12} + V_{31}, \qquad \dot{p}_1 = (p_1 + p_3) V_{31} - (p_1 + p_2) V_{12}\,, \nn
  \dot{q}_2  = p_2^2 + V_{23} + V_{12}, \qquad \dot{p}_2 = (p_2 + p_1) V_{12} - (p_2 + p_3) V_{23}\,, \nn
  \dot{q}_3  = p_3^2 + V_{31} + V_{23}, \qquad \dot{p}_3 = (p_3 + p_2) V_{23} - (p_3 + p_1) V_{31}\,,
   \ee
   to be compared with the standard equations of motion
   \be
   \lb{eqmot}
   \ddot{q}_1 = \dot{p}_1 = V_{31} - V_{12}, \qquad  \ddot{q}_2 = \dot{p}_2 = V_{12} - V_{23}, \qquad  \ddot{q}_3 = \dot{p}_3 = V_{23} - V_{31}\,. 
     \ee
 To feel the difference, compare the classical trajectories associated with $H$ and $I$ (see  Fig. 2).
 
 \begin{figure}[ht!]
 \lb{clas-HI1}
     \begin{center}

        \subfigure[$H$]{
           \lb{clas-H}  \includegraphics[width=0.4\textwidth]{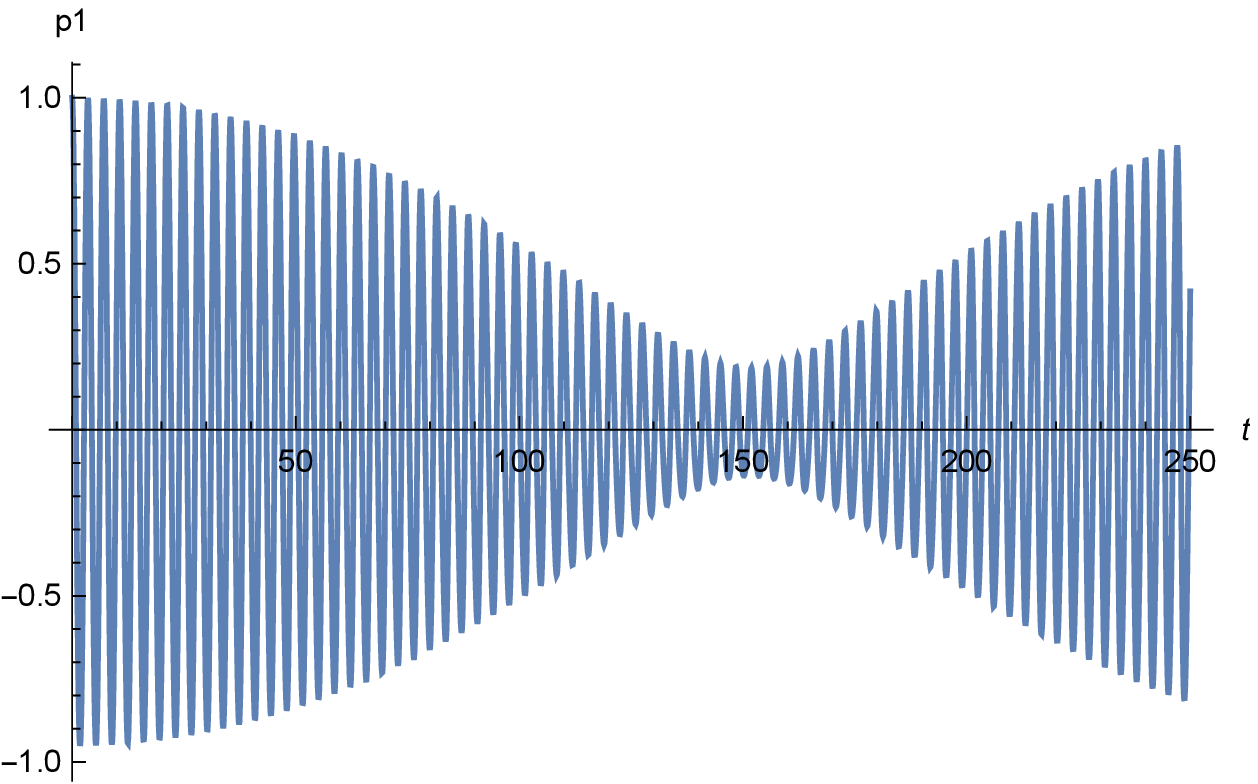}
        }
        \subfigure[$I$]{
         \lb{clas-I}
           \includegraphics[width=0.4\textwidth]{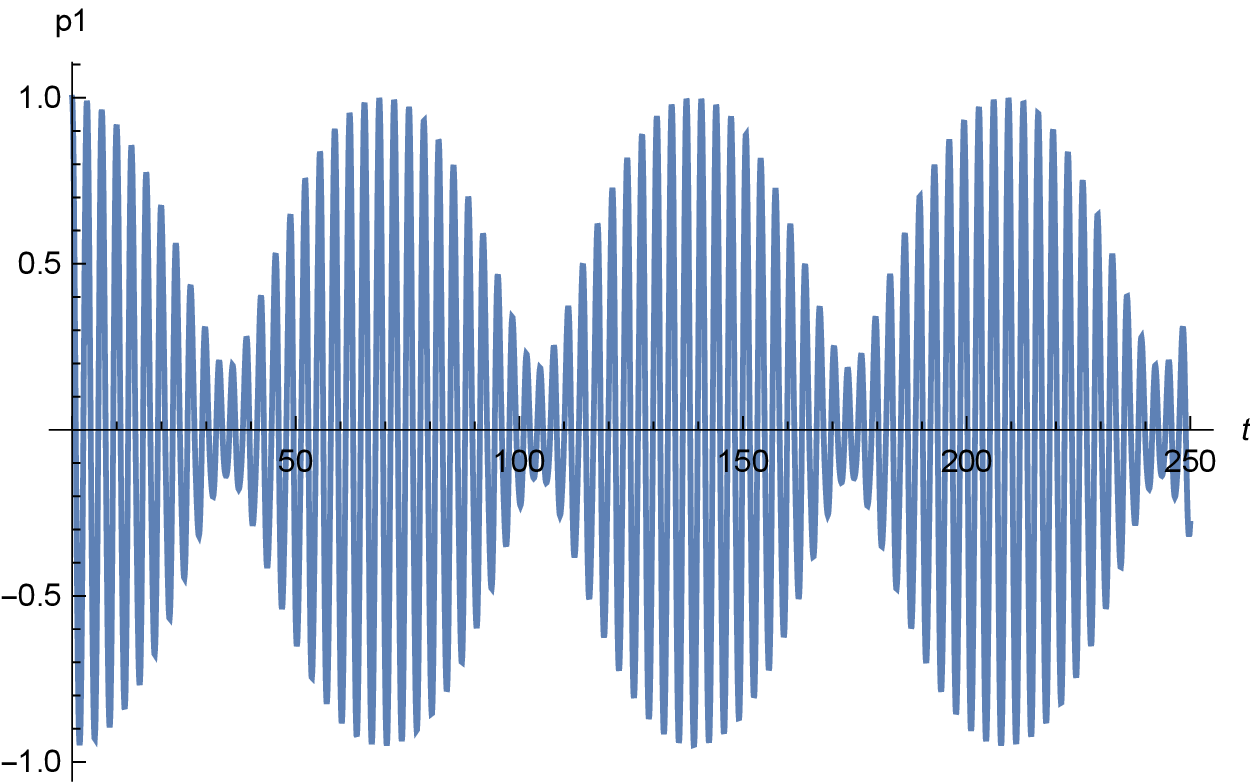}
        }
    \end{center}
    \caption{ The dependence $p_1(t)$ for the equations of motion based on $H$ and on $I$. The inital conditions  are  $q_1(0) = q_2(0) = q_3(0) =0, p_1(0) = 1, p_2(0) = p_3(0) = -.5$.}
    \end{figure}
    
    The initial conditions for both trajectories are identical and the conserved values of $H$ and $I$ are the same. The patterns of the two trajectories are similar---oscillations with beats. But for the equations of motion based on $I$, the period of beats is smaller. \footnote{There is no wonder that the classical trajectories associated with $H$ and $I$ are different. The phase space has four essential coordinates: $p_{1,2}, q_{1,2}$. The fixed values of two integrals of motion define a two-parametric surface in this space. Both trajectories lie on this surface, but they need not coincide. }
    
    The solutions to \p{eqmot} for $q_j(t)$   also oscillate, while the solutions  to \p{eqmot-I} grow quasilinearly. This difference is due to the fact that, for the system \p{eqmot-I}, the velocities do not coincide with the momenta.
    
     \begin{figure}[ht!]
 \lb{traj-I-tilde}
       \includegraphics[width=0.5\textwidth]{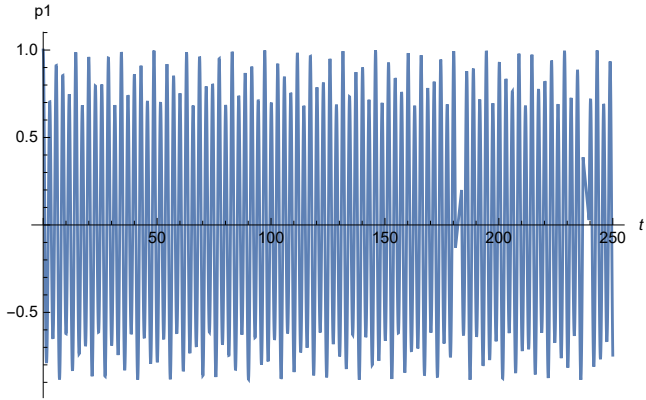}
    \caption{A typical trajectory for the Hamiltonian \p{tilde-I}. }
    \end{figure}
    
    Let us now perturb our system by adding to the new Hamiltonian $I$ an oscillatory potential:
     \be
     \lb{tilde-I}
     I \ \to \ \tilde{I} \ =\ I + \frac \alpha 2 [(q_1 - q_2)^2 + (q_1 - q_3)^2 + (q_2 - q_3)^2]\,.
      \ee
      The Poisson bracket $\{\tilde{I}, p_1 + p_2 + p_3\}$ still vanish, but the bracket $\{ \tilde{I}, H\}$ does not. Thus, $H$ is not an integral of motion anymore and the system is not integrable.
      
      The trajectories can be found numerically. The trajectory for the Hamiltonian \p{tilde-I} with $\alpha=1$ and the same initial conditions as in Fig. 2 is shown in Fig. 3. We observe quasiperiodic oscillations (the regularity of trajectories seen in Fig. 2 is now lost, which confirms the fact that the system is no longer exactly solvable). We have not observed a collapse though, with numerical calculations, one can never be sure.
      
    \subsection{Path integral}
    
    For a ghost-ridden system, the {\it Euclidean} path integral is not defined \cite{no-Euclid,obzor}. Indeed, if the spectrum does not have a bottom, the sum  for the Euclidean evolution operator,
    \be
    \lb{ev-Eucl}
   {\cal K}  (q^{\rm f}, q^{\rm in}; \beta) \ =\ \sum_n \Psi^*_n(q^{\rm f}) 
   \Psi_n(q^{\rm in}) e^{-\beta I_n}\,,
    \ee
    badly diverges. However, if the time is real, the presence of negative energies does not lead to trouble: Minkowski path integral is well defined if the classical trajectories are well defined. By the same token as for the ordinary quantum systems studied in \cite{Feynman}, the path integral for the evolution operator for an exactly solvable higher-derivative system is saturated by the trajectories close to the classical trajectory, which gives
     \be
     \lb{ev-Mink}
      {\cal K}  (q^{\rm f}, q^{\rm in}; T) \ =\  F(T)\exp\{ i S_{\rm cl}\}  \,,
       \ee
       where $S_{\rm cl}$ is the classical action on the classical trajectory bringing the system from the state $\{q^{\rm in}\}$ to the state $\{q^{\rm f}\}$ in time $T$. The only requirement for the formula \p{ev-Mink} to hold is the existence of the classical trajectories. If they exist, it does not matter whether the quantum spectrum is bounded from below or not. Incidentally, one of the examples treated in the book \cite{Feynman}   [see  Eq. (3.62) there] is the motion in the linearly growing potential. The ground state in this case is absent, but the evolution operator is still well defined.
     
       The factor $F(T)$, which comes from the integral over the deviations from the classical trajectory, is easy to determine when $S_{\rm cl}$ is large and
       the integral can be evaluated in the  Gaussian WKB  approximation.   We suspect that, for an exactly solvable model, this approximation gives the exact answer, as it does for the harmonic oscillator.
        But  it is a conjecture, of course.
       
       \section{Field theories}
       
       The higher-derivative field theory studied in \cite{duhi-pole} was not exactly solvable and the trajectories, found numerically, exhibit there a stochastic behaviour. But there are many exactly solvable $(1+1)$ systems. They are characterized by an infinite number of integrals of motion. {\it Each} of them can be chosen as a Hamiltonian, and we obtain thereby a set of higher-derivative exactly solvable $(1+1)$-dimensional field systems. Take as an example the Sine-Gordon model. Its equation of motion reads 
        \be
        \lb{egmot-SGordon}
        \phi_{tt} - \phi_{xx} \ =\ - \sin \phi \,.
          \ee
          The conventional energy of the system,
          \be
          \lb{E-SGordon}
          E \ =\ \int_{-\infty}^\infty   dx \left[ \frac 12 (\phi_t^2 + \phi_x^2) +(1 - \cos \phi) \right]
           \ee
           is conserved, and this gives the standard Hamiltonian.
           
           But there is an infinity of other integrals of motion. The next  in complexity after \p{E-SGordon} is \cite{Lamb}
           \be
           \lb{E4} 
   E_4 \ =\ \int dx \left[ \frac 14 (\phi_t - \phi_x)^4 -  (\phi_{tt} - 2 \phi_{tx}
   + \phi_{xx})^2 -    (\phi_t - \phi_x)^2  \,  \cos \phi \right]\,.
    \ee
 This expression involves higher derivatives, is not bounded from below, and the  Hamiltonian $H_4$ is ghost-ridden. To the best of my knowledge, the corresponding dynamical field equations were never studied (not speaking of quantum dynamics!), but such a study would be very interesting. 
 
 As the last example, consider the KdV system. 
 The equation of motion is
  \be
  \lb{eqmot-KdV}
  u_t  + 6 u u_x +  u_{xxx} \ =\ 0\,.
   \ee
   It follows from the Lagrangian
   \be
   \lb{LKdV}
    L \ =\ \frac 12 \psi_{xx}^2 - \psi_x^3 - \frac 12 \psi_t \psi_x\,,
     \ee
     where $\psi_x = u$. The conserved energy of the field reads
     \be
     \lb{EKdV}
     E \ =\ \int dx \left[u^3 - \frac 12 u_x^2\right].
      \ee
      The corresponding local conservation law is
      \be
      \lb{local}
      \frac \partial {\partial t} \left[u^3 - \frac 12 u_x^2\right] \ =\  \frac \partial {\partial x} \left[u_x (6 u u_x + u_{xxx}) - 
      \frac 12 u_{xx}^2 - 3 u^2 u_{xx} - \frac 92 u^4 \right] \,.
       \ee
       There is also an infinite set of higher integrals of motion:
       \be
   \lb{E4-KdV}
   E_4 \ =\ \int  dx \, [5 u^4 + 5 u^2 u_{xx} + u_{xx}^2 ],
    \ee
    etc.  The integrands in \p{E4-KdV}, etc. involve  
    higher-derivatives over $x$, but not over $t$, which does not mean ghosts in the ordinary sense.
    
    But nobody prevents us to {\it interchange} the coordinates $x$ and $t$ and consider the Lagrangian
    \be
   \lb{LKdV-inverse}
    L \ =\ \frac 12 \psi_{tt}^2 - \psi_t^3 - \frac 12 \psi_t \psi_x\,,
     \ee
     giving the equations of motion $u_x  + 6 u u_t +  u_{ttt} = 0$.
     This   Lagrangian includes  higher  temporal derivatives, and the general arguments of Refs. \cite{Woodard,eston} tell us that the corresponding energy functional
      \be
      \lb{En-inverse}
      \int  dx\, \left[ u_t (6 u u_t + u_{ttt}) - 
      \frac 12 u_{tt}^2 - 3 u^2 u_{tt} - \frac 92 u^4  \right]
      \ee
      [cf. the R.H.S. of Eq.\p{local}] is not positive definite, and the spectrum of the corresponding quantum problem is not bounded from below. Such a system involves ghosts, but exact solvability ensures that these ghosts are benign. Indeed, a general solution to the standard KdV equation exists for all $x$ and this suggests that a general solution to the equations of motion following from \p{LKdV-inverse} exists for all $t$. \footnote{We say ``suggests" because we cannot give a rigourous mathematical proof of this assertion.}

 \section*{Acknowledgements}
 I am indebted to Sergei Fedoruk, Joshua Feinberg, Askold Perelomov and Stepan Sidorov for  illuminating discussions.


\begin{thebibliography}{96}


\bibitem{Robert} D. Robert and A.V. Smilga, {\it Supersymmetry vs.  ghosts}, J. Math. Phys, {\bf 49} (2008) 042104, arXiv:math-ph/0611023.

\bibitem{obzor} A.V. Smilga, {\it Classical and quantum dynamics of higher-derivative systems}, 
Int. J. Mod. Phys., {\bf A32} (2017) 1730025.

\bibitem{padenie} K.M. Case, {\it Singular potentials}, Phys. Rev. {\bf 80} (1950) 797;

K. Meetz, {\it Singular potentials in nonrelativistic quantum mechanics}, Nuovo Cim. {\bf 34} (1964) 690;

A.M. Perelomov and V.S. Popov, {\it Collapse onto scattering centre in quantum mechanics}, Teor. Math. Fiz. {\bf 4} (1970) 48.

\bibitem{Woodard} R.P. Woodard, {\it Ostrogradsky's theorem on Hamiltonian instability}, Scholarpedia {\bf 10} (2015) 32243, arXiv:1506.02210 [hep-th].

\bibitem{eston} M. Raidal and H. Veerm\"ae, {\it On the quantization of complex higher derivative theories and avoiding the Ostrogradsky ghost}, Nucl. Phys. B {\bf 916} (2017) 607, arXiv:1611.03498.
 

\bibitem{KovPav}  
I.B. Ilhan and A. Kovner, {\it Some comments on ghosts and unitarity: the Pais-Uhlenbeck oscillator revisited}, Phys. Rev. D {\bf 88} (2013) 044045, arXiv:1301.4879 [hep-th];

M. Pav\v{s}i\v{c}, {\it Stable self-interacting Pais-Uhlenbeck oscillator}, Mod. Phys. Lett. A {\bf 28} (2013) 1350165, arXiv:1302.5257[hep-th].

\bibitem{duhi-pole} A.V. Smilga, {\it Supersymmetric field theory with benign ghosts}, 
J.Phys. A {\bf 47} (2014) 052001, arXiv:1306.6066 [hep-th]. 

\bibitem{Gutz} M.G. Gutzwiller, {\it The quantum mechanical Toda lattice, II}, Ann. Phys. {\bf 133} (1981) 304.

\bibitem{Perelomov} A.M. Perelomov, {\it  Integrable systems of classical mechanics and Lie algebras},
    Birkhauser, 1990, p.208.

\bibitem{no-Euclid} S.W. Hawking and T. Hertog, {\it Living with ghosts}, Phys. Rev. D {\bf 65} (2002) 103515, arXiv:hep-th/0107088;

K. Andrzejewski, J. Gonera and P. Maslanka, {\it Euclidean path integral and higher-derivative theories}, Progr. Theor. Phys. {\bf 125} (2011) 247.

\bibitem{Feynman} R. P. Feynman and A.R. Hibbs, {\it Quantum mechanics and path integrals}, McGraw-Hill, NY, 1965.

\bibitem{Lamb} G.L. Lamb Jr., {\it Higher conservation laws in ultrashort optical pulse propagation}, Phys. Lett. {\bf 32A} (1970) 251.




\end{thebibliography}
\end{document}